\newtheorem{theorem}{Theorem}[section]
\newtheorem{lemma}[theorem]{Lemma}
\newtheorem{problem}{Problem}
\newtheorem{defn}{Definition}
\newcommand{\MyQuote}[1]{\vspace{0.5cm}%
     \parbox{10cm}{\em #1}\hspace*{2cm}($\ast$)\\[0.5cm]}
\title{Drawing the Horton Set in an Integer Grid of Minimum Size}
\author{Luis Barba \thanks{D\'epartement d'Informatique, ULB, Belgium.} \thanks{School of Computer Science, Carleton University, Canada.
} \and
  Frank Duque \thanks{Departamento de Matem\'aticas, CINVESTAV. Partially supported by Conacyt of
Mexico, Grant 153984.} \and
 Ruy Fabila-Monroy \footnotemark[3] \and
 Carlos Hidalgo-Toscano \footnotemark[3]}
\begin{document}
\maketitle

\begin{abstract}

In 1978 Erd\H os asked if every sufficiently large set of points in general
position in the plane contains the vertices of a convex $k$-gon, with the additional
property that no other point of the set lies in its interior. Shortly after, Horton provided a construction---which is now called
the Horton set---with no such $7$-gon. In this paper we show that the Horton set of $n$ points can be realized with 
integer coordinates of absolute value at most $\frac{1}{2} n^{\frac{1}{2} \log (n/2)}$.
We also show that any set of points with integer coordinates combinatorially equivalent (with the same order type)
to the Horton set,
 contains a point with  a coordinate of absolute value at least 
 $c \cdot n^{\frac{1}{24}\log (n/2)}$, where $c$ is a positive constant.

\end{abstract}

\section{Introduction}
Although the number of distinct sets of $n$ points in the plane is infinite,
for most problems in Combinatorial Geometry only a 
finite number of them can be considered as essentially distinct. 
Various equivalence relations on point sets have been proposed
by Goodman and Pollack~\cite{sec,semispaces,complexity,m_sorting}. 
One of these is the \emph{order type}; it is defined
on a set of points, $S$, as follows. To every triple $(p,q,r)$ of points of $S$, assign: a $-1$ if $r$
lies to the left of the oriented line from $p$ to $q$; a $1$ if $r$ lies
to the right of this line; and a $0$ if $p$, $q$, and $r$ are collinear.
This assignment is also called the orientation of the triple, which may
be \emph{negative}, \emph{positive} or \emph{zero}, respectively. The set of all
triples together with their orientation is the order type of $S$; 
two sets of points have the same
order type if there is a bijection between them that preserves orientations.

Since their inception, order types were defined with computational
applications in mind (see~\cite{m_sorting} for example). The orientation of a triple is
determined by the sign of a determinant; many algorithms
use precisely this determinant as their geometric primitive. 
Given that the determinant of an integer valued matrix is an integer,
for numerical computations it is best if a point set has
integer coordinates. Two main reasons are that integer arithmetic is much
faster than floating point arithmetic, and that floating point
arithmetic is prone to rounding errors. The latter is easily taken care of
 with an integer representation that can handle arbitrarily large numbers.

If a set of $n$ points has already integer coordinates, it is best if these coordinates have as small
 absolute value as possible---again,
for computational reasons. Even though rounding errors can be avoided using arbitrarily large
integers, the cost of computation
increases as the numbers get larger. Also, if we wish
to store the point set, the number of bits needed depend on the
size of the coordinates.

Let $S$ be a set of $n$ points in general position in the plane.
A \emph{drawing} of $S$ is a set of points with integer
coordinates and with the same order type as $S$. The
\emph{size} of a drawing is the maximum of
the absolute values of its coordinates. For the reasons mentioned above,
 it is of interest to find the drawing of $S$ of minimum size.
In~\cite{goodman}  Goodman, Pollack and Sturmfels
presented sets of $n$ points in general position whose
smallest drawings have size $2^{2^{c_1 n}}$, and proved  that every point
set has a drawing of size at most $2^{2^{c_2 n}}$ (where
$c_1$ and $c_2$ are positive constants). 

Aichholzer, Aurenhammer and Krasser~\cite{database} have assembled a database of drawings.
For $n=3, \dots, 11$, the database contains a drawing of every possible set
of $n$ points in general position in the plane.
The main advantage of having these drawings is that one can use them
to compute certain combinatorial parameters of all point sets up to eleven points. 
The order type data base stops at eleven because the size of the database
grows prohibitively fast. Thus, we cannot hope to store drawings for all point sets
beyond small values of $n$; it is convenient however, to have programs that generate
small drawings of infinite families of point sets which
are of known interest in Combinatorial Geometry. 

In this direction, Bereg et al. \cite{double_circle} provided a linear time algorithm to generate
a drawing of a set of points called the Double Circle\footnote{The Double
 Circle of $2n$ points is constructed as follows. Start with
a convex $n$-gon; arbitrarily close to the midpoint 
of each edge, place a point in the interior of this polygon;
finally place a point at each vertex of the polygon.}.
Their drawing has size $O(n^{3/2})$; they also proved a lower bound of $\Omega(n^{3/2})$ on the size of every drawing
of the Double Circle.  In this paper we do likewise for a point set called the Horton Set~\cite{hortonsets}.
In section~\ref{sec:upper} we provide a drawing of size $\frac{1}{2} n^{\frac{1}{2} \log (n/2)}$ of the Horton set of $n$ points;
our drawing can be easily constructed in linear time. We also show in Section~\ref{sec:lower},  
a lower bound of $c \cdot n^{\frac{1}{24}\log (n/2)}$ (for some $c>0$) on the minimum size 
of any drawing of the Horton set. As a corollary, $\Theta(n\log^2 n)$ bits
are necessary and sufficient to store a drawing of the Horton set. 

We are mainly interested in having an algorithm that generates small drawings of the Horton set. 
However, the problem of finding small drawings 
also raises interesting theoretical questions. For example, after learning of our
lower bound, Alfredo Hubard posed the following problem.

\begin{problem}
Does every sufficiently large set of points, for which there exist a drawing of polynomial size, contains 
an empty $7$-hole?
\end{problem}

A \emph{$k$-hole} of a point set $S$, is a  subset of $k$ points of $S$ that form
a convex polygon, with no other point of $S$ in its interior. 
Horton sets were constructed as an example of arbitrarily large point sets without 
$7$-holes. In particular our lower bound implies that any set of points 
that has a drawing of polynomial size, cannot
have large copies of the Horton set.
We also believe that the machinery developed to prove Theorem~\ref{thm:lower_gen} will be
useful for analyzing Horton sets in other settings.

A preliminary version of this paper appeared in CCCG'14~\cite{HortonCCCG}. In this paper
all point sets are in general position and all logarithms are base 2.
\subsection{The Horton Set(s)}\label{sec:horton}

In 1978 Erd{\H o}s~\cite{somemore} asked if for every $k\ge 3$, 
any sufficiently large set of points in the plane
contains a $k$-hole. 
Shortly after, Harborth~\cite{harborth} showed that every set of $10$ points contains
a $5$-hole. The case of empty triangles ($3$-holes) is trivial;  
the case of $4$-holes was settled in the affirmative in another context
by Esther Klein long before Erd\H os posed his question (see \cite{happyend}).
Horton~\cite{hortonsets} constructed arbitrarily large point sets without $7$-holes,
and thus without $k$-holes for larger values of $k$. His construction is now known as
the Horton set. The case of $6$-holes remained open for almost 30 years, until
Nicol\'as \cite{nicolas}, and independently Gerken \cite{gerken}, proved
that every sufficiently large set of points contains a $6$-hole.

Since its introduction, the Horton set has been used as an extremal
example in various combinatorial problems on point sets. 
For example, a natural question is to ask: What is the minimum number
of $k$-holes in every set of $n$ points in the plane? The case
of empty triangles was first considered by Katchalski and Meir~\cite{katmeir}---they constructed 
a set of $n$ points with $200n^2$ empty triangles
and showed that every set of $n$ points contains $\Omega(n^2)$ of them. 
This bound  was later improved by B\'ar\'any and F\"uredi~\cite{emptysimplices}, who
 showed that the Horton set has $2n^2$
empty triangles. The Horton set was then used in a series of papers as a building block
to construct sets with fewer and fewer $k$-holes. The first of these constructions was given by Valtr~\cite{valtr95};
it was later improved by Dumitrescu~\cite{dumi00} and the final improvement was given by B\'ar\'any and Valtr~\cite{valtr04}.

Devillers et al.~\cite{chromaticvariants} considered chromatic variants of these problems. In particular,
they described a three-coloring of the points of the Horton set with no empty monochromatic triangles. Since every set
of $10$ points contains a $5$-hole, every two-colored set of at least $10$ points contains
 an empty monochromatic triangle. The first non trivial lower bound of $\Omega(n^{5/4})$, on 
the number of empty monochromatic triangles
in every two-colored set of $n$ points, was given by Aichholzer et al~\cite{trimono}. 
This was later improved by Pach and T\'oth~\cite{tripach} to $\Omega(n^{4/3})$. 
The known set with the least number of empty monochromatic triangles is 
given in \cite{trimono}; it is based on the known set with the fewest
number of empty triangles, which in turn is based on the Horton set.

We now define the Horton set.
Let $S$ be a set of $n$ points in the plane with no two points
having the same $x$-coordinate; sort
its points by their $x$-coordinate so that 
$S:=\{p_0, p_1,\dots, p_{n-1}\}$. Let $S_{\textrm{even}}$ be the subset of the
even-indexed points, and  $S_{\textrm{odd}}$ be the subset of the 
odd-indexed points. That is, $S_{\textrm{even}}=\{p_0, p_2,\dots\}$
and $S_{\textrm{odd}}=\{p_1, p_3,\dots\}$. Let $X$ and $Y$ be two sets of points in the plane.
We say that $X$ is \emph{high above} $Y$ if: every line determined by two points in $X$ is above every point in $Y$, and
every line determined by two points in $Y$ is below every point in $X$.

\begin{defn} \label{def:mat} 
The \textbf{Horton set} is a set $H^k$ of $2^k$ points, with no two points having
the same $x$-coordinate, that satisfies the following properties.
\begin{enumerate}
  \item $H^0$ is a Horton set;

  \item both $H_{\textrm{even}}^k$ and $H_{\textrm{odd}}^k$ are Horton sets ($k \ge 1$);
  
  \item $H_{\textrm{odd}}^k$ is high above $H_{\textrm{even}}^k$ ($k \ge 1$).
\end{enumerate}
\end{defn}

This definition is very similar to the one given in Matou{\v{s}}ek's book~\cite{mat} (page 36).
The only difference is that in that definition either
$H_{\textrm{even}}^k$ is high above $H_{\textrm{odd}}^k$ or $H_{\textrm{odd}}^k$ is high above $H_{\textrm{even}}^k$; i.e. this relationship
is allowed to change at each step of the recursion. As a result, for
a fixed value of $k$, one gets a family of 
``Horton sets'' (with different order types), rather than a single Horton set. 
Normally, this does not affect the properties that make Horton sets notable. For example, none of the them
have empty heptagons. In some circumstances it does; as is the case of the constructions 
with few $k$-holes \cite{valtr95,dumi00,valtr04}. We fixed one of 
these two options in order to make the proof of our lower bound more readable, but our results should
hold for the general setting. Note that this choice fixes the order type of the Horton
set. However, an arbitrary drawing of the Horton set need not satisfy Definition~\ref{def:mat}.

Horton described his set in a concrete manner with specific integer 
coordinates. Another description given in~\cite{emptysimplices}, is the following.

\begin{itemize}
  \item $H^0:= \{(1,1)\}$.
  \item $H^1:=\{(1,1),(2,2)\}$.
  \item $H^k:=\{(2x-1,y):(x,y) \in H^{k-1} \} \cup \{(2x,y+3^{2^{k-1}}):(x,y) \in H^{k-1}\}$.
\end{itemize}

This drawing and the original due to Horton have exponential size; we have not seen
in the literature a drawing of subexponential size. Then again, to the
best of our knowledge nobody has tried to find small drawings of the Horton set.

 \section{Upper bound} \label{sec:upper}

In this section we construct a small drawing of 
the Horton set of $n:=2^k$ points. First, we define
the following two functions.

\begin{align*}
f(i)= & \begin{cases}
0 & \,\mbox{if}\, i=1.\\
2^{\frac{i(i-1)}{2}-1} & \,\mbox{if}\, i\geq2.
\end{cases}\\
g(i)= & \begin{cases}
0 & \,\mbox{if}\, i=1.\\
f(i)-f(i-1) & \,\mbox{if}\, i\geq2.
\end{cases}
\end{align*}

Afterwards, we use $f$ and $g$ to construct our drawing recursively as follows.

\begin{itemize}
 \item $P^0:=\{(0,0)\}$;
 
 \item $P_{\textrm{even}}^i:= \{(2x,y):(x,y)\in P^{i-1} \}$; 
 
 \item $P_{\textrm{odd}}^i:=\{ (2x+1,y+g(i)):(x,y)\in P^{i-1}\}$;
 
 \item $P^i:=P_{\textrm{even}}^i \cup P_{\textrm{odd}}^i$.
 
\end{itemize}

\begin{figure}
\begin{center}
\includegraphics[width=0.7\textwidth]{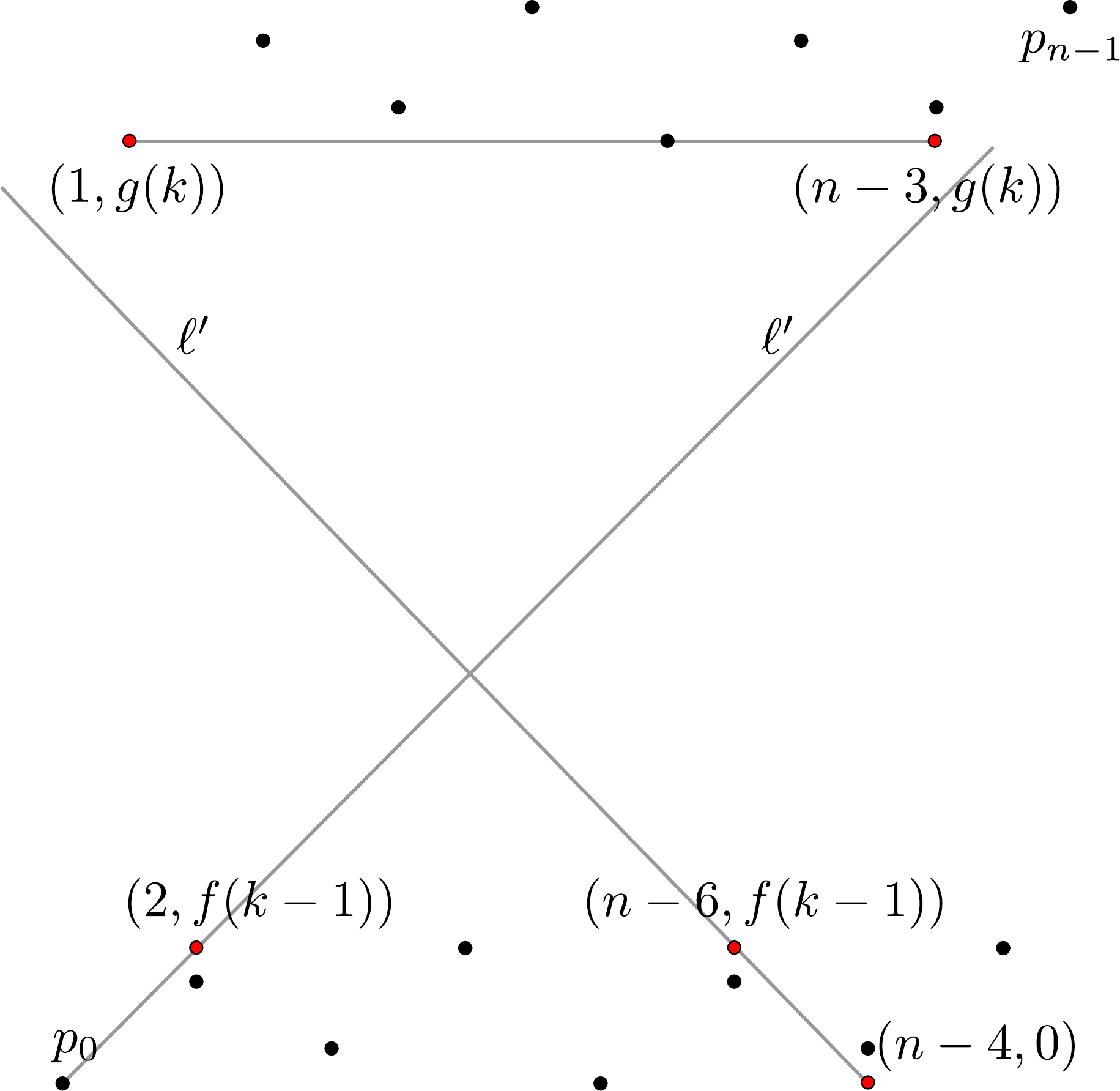}
\end{center}
\caption{The two possible definitions of $\ell'$ in the proof of Theorem~\ref{thm:upper}.}
\label{fig:horton_upper}
\end{figure}

\begin{theorem}
\label{thm:upper}
There exist a drawing of the Horton set of $n$ points of size $\frac{1}{2} n^{\frac{1}{2} \log (n/2)}$
for $n \ge 16$.
\end{theorem}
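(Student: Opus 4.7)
The theorem has two aspects: showing (i) that the explicitly constructed $P^k$ has the order type of the Horton set of Definition~\ref{def:mat}, and (ii) that its coordinates are bounded by $\tfrac{1}{2}n^{\frac{1}{2}\log(n/2)} = f(k)$ when $n = 2^k \geq 16$.

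Part (ii) will be a short induction. By construction the $x$-coordinates of $P^k$ lie in $[0,2^k-1]$, while the maximum $y$-coordinate satisfies $\max_{P^k} y = \max_{P^{k-1}} y + g(k)$, so $\max_{P^k} y = \sum_{i=1}^{k} g(i) = f(k)$ by the telescoping $g(i)=f(i)-f(i-1)$. The hypothesis $k \geq 4$ ensures $f(k) \geq 2^k-1$, so $f(k)$ dominates the $x$-bound.

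For part (i), I would proceed by induction on $k$. The base $k=0$ is trivial. For the step, the maps $(x,y)\mapsto(2x,y)$ and $(x,y)\mapsto(2x+1,y+g(k))$ are orientation-preserving affine transformations (both have positive Jacobian), so $P^k_{\text{even}}$ and $P^k_{\text{odd}}$ inherit the Horton-set order type from $P^{k-1}$. The real work is to verify that $P^k_{\text{odd}}$ lies high above $P^k_{\text{even}}$. Given a line $\ell$ through two points $p=(2a+1,y_a+g(k))$, $q=(2b+1,y_b+g(k))$ of $P^k_{\text{odd}}$ and a point $r=(2c,y_c)$ of $P^k_{\text{even}}$, with $(a,y_a),(b,y_b),(c,y_c)\in P^{k-1}$, a direct computation yields
\[
\ell(2c)-y_c \;=\; g(k)\;-\;\tfrac{1}{2}m\;+\;(L(c)-y_c),
\]
where $L$ is the line in $P^{k-1}$ through $(a,y_a)$ and $(b,y_b)$ and $m$ is its slope. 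I would then establish $g(k) > \tfrac{1}{2}m+(y_c-L(c))$ by case analysis on the position of $c$. If $c\in[\min(a,b),\max(a,b)]$ then $L(c)$ lies in the convex hull of $y_a,y_b\in[0,f(k-1)]$, so $|y_c-L(c)|\leq f(k-1)$ and the inequality reduces to $g(k) \geq \tfrac{3}{2}f(k-1)$, easily satisfied since $g(k)=(2^{k-1}-1)f(k-1)\geq 3f(k-1)$ for $k\geq 3$. If $c$ lies outside $[\min(a,b),\max(a,b)]$, the two sub-panels of Figure~\ref{fig:horton_upper} correspond to the two sub-cases $c<\min(a,b)$ and $c>\max(a,b)$, in each of which one introduces an auxiliary line $\ell'$ in $P^{k-1}$ to bound the extrapolation $L(c)$. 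A symmetric calculation handles the dual condition, that every line through two points of $P^k_{\text{even}}$ lies below every point of $P^k_{\text{odd}}$.

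I expect the extrapolation case to be the main obstacle. The naive bounds $|m|\leq f(k-1)$ and $|a-c|\leq 2^{k-1}$ are not independent: a line of maximal slope in $P^{k-1}$ must have $|b-a|$ small, while a point $(c,y_c)$ with $c$ far from both $a$ and $b$ has its $y$-coordinate structurally constrained (for instance, the leftmost point $(0,0)\in P^{k-1}$ has $y_c=0$ rather than the maximum $f(k-1)$). Exploiting this coupling, together with the inductive bounds on the coordinates and slopes of $P^{k-1}$, is the delicate step; without it the identity $g(k)=(2^{k-1}-1)f(k-1)$ leaves no slack in the inequality.
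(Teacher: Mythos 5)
Your setup is sound and matches the paper's strategy in outline (induction on $k$, the affine maps $(x,y)\mapsto(2x,y)$ and $(x,y)\mapsto(2x+1,y+g(k))$ giving the Horton property of the two halves, the size bound from $\sum_i g(i)=f(k)$), and your interpolation case $c\in[\min(a,b),\max(a,b)]$ is fine. But the proof is not complete: the extrapolation case, which you yourself flag as ``the delicate step'' that you have not carried out, is precisely the entire content of the paper's argument, and your sketch offers no concrete mechanism for it. As you correctly observe, the naive bounds $|m|\le f(k-1)$ and $|c-a|\le 2^{k-1}$ give an error term of order $2^{k-1}f(k-1)\approx f(k)>g(k)$, so something structural about $P^{k-1}$ must be invoked; saying ``one introduces an auxiliary line $\ell'$ to bound the extrapolation'' names the shape of a fix without supplying it.

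The missing ingredients, which the paper makes explicit, are: (a) a reduction from ``all points above all lines'' to checking only three extreme points, because the minimum $y$-coordinate of $P^k_{\textrm{odd}}$ is $g(k)$, so every odd point except $p_{n-1}$ lies above the horizontal segment from $(1,g(k))$ to $(n-3,g(k))$; (b) the coupling between slope and horizontal reach, namely that in $P^k_{\textrm{even}}$ the leftmost point $(0,0)$ is also the lowest and the rightmost point is the highest, so a line through two even points with non-positive slope cannot contain any point to the right of $p_{n-4}$, while its slope is confined to $[-f(k-1)/2,\,f(k-1)/2]$; this lets one dominate every such line by one of two fixed extremal lines $\ell'$ (through $(n-6,f(k-1))$ and $(n-4,0)$, resp.\ through $(0,0)$ and $(2,f(k-1))$), after which two short determinant computations (with slack $3f(k-1)$ and $f(k-1)$) finish the job, the corner point $p_{n-1}$ being handled by a slope comparison with the segment from $(n-3,g(k))$ to $p_{n-1}$. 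Your per-triple formulation would need an analogue of (b) for lines through two odd points evaluated at a far-away even point, and nothing in your write-up identifies which structural property of $P^{k-1}$ supplies it; also note that Figure~\ref{fig:horton_upper} depicts the two choices of $\ell'$ according to the sign of the slope of $\ell$, not your sub-cases $c<\min(a,b)$ versus $c>\max(a,b)$. As it stands the proposal is a plausible plan with the decisive step acknowledged but absent, so it does not yet prove the theorem.
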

\begin{proof}
We prove by induction on $k$ that $P^k$ is the desired drawing.  
It can be verified by hand that $P^4$ has size equal to $32=\frac{1}{2} 16^{\frac{1}{2} \log (16/2)}$; 
assume that $k \ge 5$.
By induction $P_{\textrm{even}}^{k}$ and $P_{\textrm{odd}}^k$ are Horton sets;
it only remains to show that $P_{\textrm{odd}}^k$ is high above $P_{\textrm{even}}^k$.
We only prove that every point of $P_{\textrm{odd}}^k$ is above
every line through two points of $P_{\textrm{even}}^k$;
the proof that every point of $P_{\textrm{even}}^k$ is below
every line through two points of $P_{\textrm{odd}}^k$ is analogous.

Let $p_0, p_1,\dots, p_{n-1}$ be the points of $P^k$ sorted by their
$x$-coordinate. 
Let $0\le i<j \le n-1$ be two even integers, and 
let $\ell$ be the directed line from $p_i$ to $p_j$.  
By definition $P_{\textrm{odd}}^{k}$ is above the vertical line passing through 
$p_1$; in particular $P_{\textrm{odd}}^k\setminus\{ p_{n-1}\}$ is above the line segment
joining $p_1$ and $p_{n-3}$. Since the smallest $y$-coordinate
of $P_{\textrm{odd}}^k$ is equal to $g(k)$,  $p_1$ and $p_{n-3}$ are above the line
segment joining the points $(1,g(k))$ and  $(n-3,g(k))$. 
Therefore, it suffices to show that $(1,g(k))$, $(n-3,g(k))$ and $p_{n-1}$
are above $\ell$. 

We define a line $\ell'$ with the property that if $(1,g(k))$ and  $(n-3,g(k))$
are above $\ell'$, then $(1,g(k))$, $(n-3,g(k))$ and $p_{n-1}$ are above $\ell$.
Afterwards we show that indeed $(1,g(k))$ and  $(n-3,g(k))$ are above $\ell'$.

If the slope of $\ell$ is non-positive, define 
$\ell'$ to be the line passing through the points $(n-6,f(k-1))$ and $(n-4,0)$;
if the slope of $\ell$ is positive, define $\ell'$ to 
be the line passing through the points $(0,0)$ and $(2,f(k-1))$. 
Note that the largest $y$-coordinate of $P_{\textrm{even}}^k$ 
is equal to $\sum_{i=1}^{k-1} g(i)=f(k-1)$. Therefore
the slope of $\ell$ is at least $-f(k-1)/2$ and at most $f(k-1)/2$;
in particular the absolute value of the slope of $\ell'$ is larger or equal
to the absolute value of the slope of $\ell$. 
The farthest point of $P_{\textrm{even}}^k$ to the right that $\ell$ can contain while having non-positive
slope is $p_{n-4}$ (which has $x$-coordinate equal to $n-4$); the farthest point of $P_{\textrm{even}}^k$ to 
the left that $\ell$ can contain while having positive
slope is $p_0$. Therefore in both cases if $(1,g(k))$ and  $(n-3,g(k))$
are above $\ell'$, then they are also above $\ell$; see Figure~\ref{fig:horton_upper}.

If $\ell$ has non-positive
slope and $(1,g(k))$ is above $\ell$, then $p_{n-1}$ is also above
$\ell'$ since $p_{n-1}$ has larger $x$-coordinate. If $\ell$ has positive
slope and $(n-3,g(k))$ is above $\ell'$, then $p_{n-1}$ must
also be above $\ell$. Otherwise $\ell$ intersects the line segment
joining $(n-3,g(k))$ and $p_{n-1}$; this line segment
has slope equal to $f(k-1)/2$, since the $y$-coordinate of $p_{n-1}$ is
equal to $\sum_{i=1}^{k} g(i)=f(k)$. This in turn would imply that
$\ell$ has slope larger than $f(k-1)/2$---a contradiction.

Suppose $\ell$ has non-positive slope. Then it suffices
to show that $(1,g(k))$ is above $\ell'$. This is the case
since:

\begin{eqnarray*}
 &  & \hspace{-2em}\left|\begin{array}{ccc}
n-6 & f(k-1) & 1\\
n-4 & 0 & 1\\
1 & g(k) & 1
\end{array}\right|\\
 & = & 2g(k)-(n-5)f(k-1)\\
 & = & 2f(k)-(n-3)f(k-1)\\
 & = & 2f(k)-2^kf(k-1)+3f(k-1)\\
 & = & 3f(k-1)\\
 & > & 0.
\end{eqnarray*}

Suppose $\ell$ has positive slope. Then it suffices
to show that $(n-3,g(k))$ is above $\ell'$. This is the case
since:

\begin{eqnarray*}
 &  & \hspace{-2em}\left|\begin{array}{ccc}
0 & 0 & 1\\
2 & f(k-1) & 1\\
n-3 & g(k) & 1
\end{array}\right|\\
 & = & 2g(k)-(n-3)f(k-1)\\
 & = & 2f(k)-(n-1)f(k-1)\\
 & = & 2f(k)-2^kf(k-1)+f(k-1)\\
 & = & f(k-1)\\
 & > & 0.
\end{eqnarray*}

Finally the largest $x$-coordinate of $P^k$ is equal to $n-1$, and the
largest $y$-coordinate of $P^k$ is equal to 
\[ \sum_{i=1}^{k} g(i)=f(k)=2^{\frac{k(k-1)}{2}-1}=\frac{1}{2} n^{\frac{1}{2} \log (n/2)},\]
since $k=\log n$.
Therefore, $P^k$ is a drawing of the Horton set of $n$ points of size $\frac{1}{2} n^{\frac{1}{2} \log (n/2)}$.
\end{proof}


\section{Lower bound} \label{sec:lower}

In this section we prove a lower bound on the size of any
drawing of the Horton set. 
As mentioned before, a drawing of the Horton set might not
satisfy Definition~\ref{def:mat};
we call a drawing  that does, an \emph{isothetic} drawing
of the Horton set.
We first show a lower bound on the size of isothetic drawings
of the Horton set (Theorem~\ref{thm:lower_isothetic}); afterwards, we 
consider the general case (Theorem~\ref{thm:lower_gen}).
Throughout this section $P$ is an isothetic drawing
of the Horton set of $n:=2^k$ points, and $p_0, p_1,\dots, p_{n-1}$
are the points of $P$ sorted by their $x$-coordinate. 

As an auxiliary structure, we recursively define a complete rooted binary
tree $T$, as follows.  $P$ is the root of $T$; 
and if $Q \subset P$ is a vertex of $T$, of at least two points, 
then $Q_{\textrm{even}}$ and $Q_{\textrm{odd}}$ are its left and right children,
respectively. Furthermore, for each vertex in $T$, label the edge incident to its left
child with a ``0'' and the edge incident with its right child with  a``1'';
the labels encountered in a path from a leaf $\{p_i\}$ to the root are
precisely the bits in the binary expansion of $i$; see Figure~\ref{fig:T}.

By construction, the vertices of $T$ are sets of $2^i$ points of $P$ (for some $0 \le i \le k$). Let $T_i$ be the 
set of vertices of $T$
that consist of exactly $2^i$ points of $P$: we call it the $i$-th\footnote{In the literature
the $i$-th level of a binary tree are those vertices at distance $i$ from the root; we have precisely
the opposite order.} level of $T$. The first level, $T_1$, are the vertices of $T$ that
consist of a pair of points of $P$. For each such pair, we consider the line through them as defined
by them. 

Let $R$ be the closed vertical slab bounded by the vertical lines
through $p_{n/4}$ and $p_{3n/4-1}$. Let $Q$ be a vertex at the first level of $T$
and let $p_i$ and $p_j$ be its leftmost and rightmost points
respectively.
Suppose that $Q$ is a left child. Then the two most significant bits in the 
binary expansion of $i$ are ``00'',
and the two most significant bits in the binary
expansion of $j$ are ``10''.
This implies that $i \le n/4$ and $j-i=n/2$; in particular
$p_j$ is contained in $R$, while $p_i$ is to the left of $R$.
In this case, we say that $Q$ is \emph{left-to-right} crossing.
By similar arguments if $Q$ is a right child, then
$p_i$ is contained in $R$, while $p_j$ is to the right of $R$.
In this case we say that $Q$ is \emph{right-to-left} crossing.
Note that the vertices in the first level of $T$, in their left to right
order in $T$, are alternatively \emph{left-to-right} and 
\emph{right-to-left} crossing (see Figure~\ref{fig:T}).
The following lemma relates the left to right order of these vertices in $T$,
to the bottom-up order of their corresponding pairs of points of $P$.

\begin{figure}
\begin{center}
\includegraphics[width=1.0\textwidth]{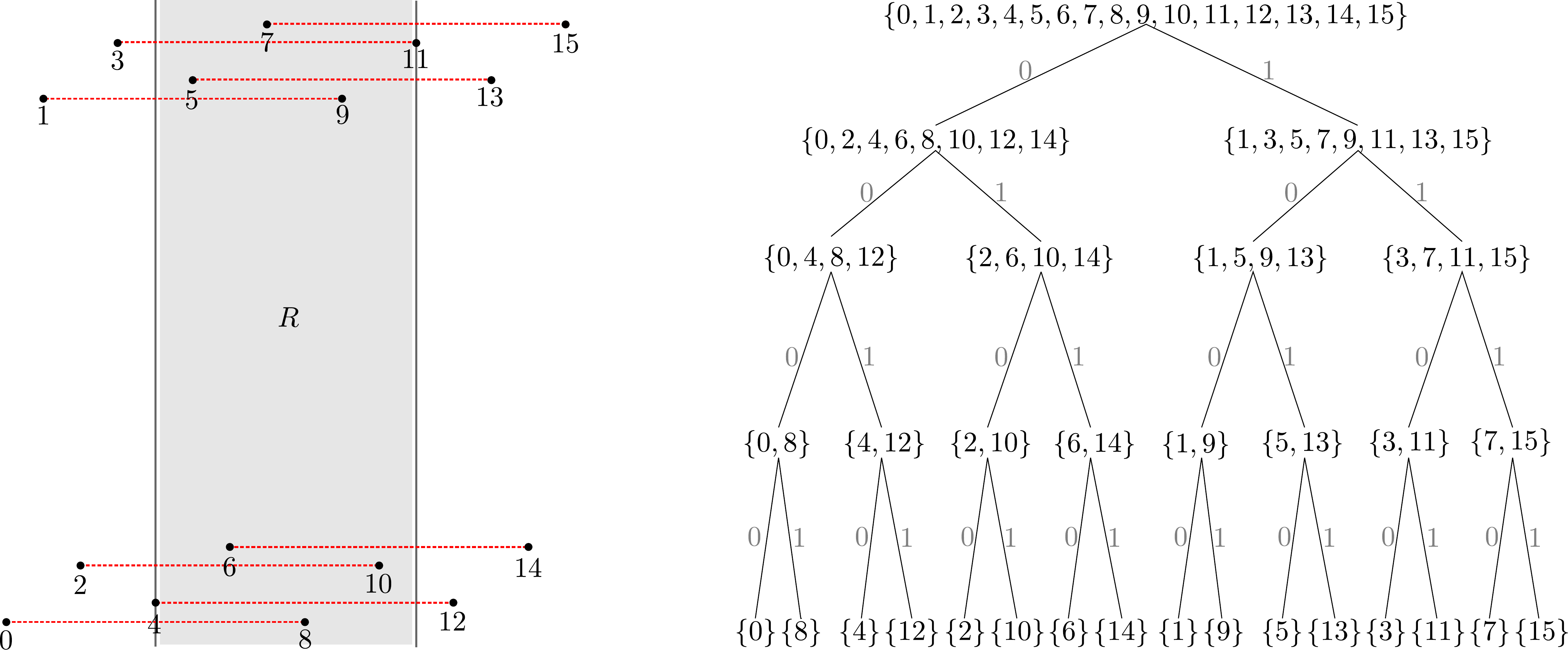}
\end{center}
\caption{The Horton set and its associated tree $T$.}
\label{fig:T}
\end{figure}

\begin{lemma}\label{lem:no_cut} 
The lines defined by the vertices
of the first level of $T$ do not intersect inside $R$. In particular, the bottom-up
order of convex hull of these vertices corresponds to their
left to right order in $T$.
\end{lemma}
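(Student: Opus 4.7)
My plan is to prove that for any two distinct level-$1$ vertices $Q_1,Q_2$ of $T$ with $Q_1$ lying to the left of $Q_2$ in $T$, one has $\ell_{Q_1}(x)<\ell_{Q_2}(x)$ at every $x\in R$; this immediately yields both conclusions of the lemma.

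Let $V$ be the lowest common ancestor of $Q_1,Q_2$ in $T$; by the definition of $T$, $Q_1\subseteq V_{\textrm{even}}$ and $Q_2\subseteq V_{\textrm{odd}}$. Since $P$ is isothetic, $V_{\textrm{odd}}$ is high above $V_{\textrm{even}}$, so $\ell_{Q_2}(x_p)>y_p$ for every $p\in V_{\textrm{even}}$ and $\ell_{Q_1}(x_p)<y_p$ for every $p\in V_{\textrm{odd}}$. In particular, $\ell_{Q_1}<\ell_{Q_2}$ at each of the four $x$-coordinates of the points of $Q_1\cup Q_2$. Since $\ell_{Q_2}-\ell_{Q_1}$ is a linear (or constant) function, it will be strictly positive throughout $R$ as soon as I also exhibit two $x$-coordinates $x^-<x_{n/4}$ and $x^+>x_{3n/4-1}$ at which $\ell_{Q_1}<\ell_{Q_2}$.

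If the crossing types of $Q_1$ and $Q_2$ differ, such $x^\pm$ already lie among the endpoints of $Q_1\cup Q_2$, namely at the respective outside endpoints of the left-to-right and right-to-left vertices. Otherwise both are left-to-right crossing, in which case I write $Q_1=\{p_i,p_{i+n/2}\}$ with $i<n/4$ and take $x^-=x_i$; since $Q_1$ and $Q_2$ share $b_{k-2}=0$, their LCA has depth at most $k-3$ in $T$, so the pair $Q':=\{p_{i+n/4},p_{i+3n/4}\}$ is a $T$-level-$1$ vertex contained in $V_{\textrm{even}}$, and $\{Q_1,Q'\}$ are precisely the level-$1$ children of a common level-$2$ ancestor $H$ with $Q_1=H_{\textrm{even}}$ and $Q'=H_{\textrm{odd}}$. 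The isothetic property applied at $H$ gives $\ell_{Q_1}<y$ on $Q'$, while the one at $V$ gives $\ell_{Q_2}>y$ on $Q'\subseteq V_{\textrm{even}}$; hence $\ell_{Q_1}<\ell_{Q_2}$ at $x_{i+3n/4}\ge x_{3n/4}$, and I may take $x^+=x_{i+3n/4}$. The case where both vertices are right-to-left crossing is symmetric, obtaining $x^-$ from $Q_2$ and the analogous level-$2$ sub-Horton inside $V_{\textrm{odd}}$.

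The main technical step is identifying the correct level-$2$ sub-Horton $H$ by tracing the binary labels on the path from $Q_1$ (or $Q_2$) up to $V$, and checking that the auxiliary pair $\{p_{i+n/4},p_{i+3n/4}\}$ really is a level-$1$ vertex of $T$ lying in the same half as $Q_1$; once this bookkeeping is in place, the ``high above'' relation does the rest. The bottom-up ordering statement then follows because ``$Q_1$ to the left of $Q_2$ in $T$'' is by definition ``$Q_1\subseteq V_{\textrm{even}}$,'' which, by the argument above, forces $\ell_{Q_1}<\ell_{Q_2}$ throughout $R$.
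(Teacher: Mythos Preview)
Your argument is correct. Both you and the paper reduce the main claim to the ``high above'' relation at the lowest common ancestor $V$ of $Q_1$ and $Q_2$, and in the mixed-type case (one left child, one right child) the two proofs are essentially identical: the outside endpoints of the left-to-right and right-to-left pairs already furnish witnesses $x^-$ and $x^+$ flanking $R$.

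The difference is in the same-type case. The paper never treats two left children (or two right children) directly; instead it observes that left and right children alternate in the left-to-right order of $T_1$, so between any two same-type vertices there is an opposite-type one, and the conclusion follows by transitivity of ``$\gamma_1<\gamma_2$ on $R$'' through that intermediate line. Your route is different: you manufacture the missing outside witness by passing to the level-$2$ ancestor $H$ of $Q_1$ (respectively $Q_2$), using its other child $Q'$ to produce a point of $V_{\textrm{even}}$ (respectively $V_{\textrm{odd}}$) with $x$-coordinate on the far side of $R$, and then comparing $\ell_{Q_1}$ and $\ell_{Q_2}$ at that point via the ``high above'' relations at $H$ and at $V$. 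This requires the extra bookkeeping you flag (checking that $V$ has level $\ge 3$ so that $Q'\subset V_{\textrm{even}}$), but it gives a self-contained comparison of any pair without invoking a third line. The paper's transitivity trick is shorter; your construction is more explicit and arguably makes the role of the level-$2$ structure clearer.
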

\begin{proof}

Let $Q_1$ and $Q_2$ be two vertices in the first level of $T$
such that $Q_1$ is a left child and $Q_2$ is a right child.
Without loss of generality assume that in the left to right order in $T$, $Q_1$ 
is before $Q_2$. 
Then, $Q_1$ is left-to-right crossing and $Q_2$ is
right-to-left crossing. Let $\gamma_1$ and $\gamma_2$ 
be the lines defined by $Q_1$ and $Q_2$, respectively.
If $\gamma_1$ and $\gamma_2$ intersect inside $R$, then
the leftmost point of $Q_1$ is above $\gamma_2$ or
the rightmost point of $Q_2$ is below $\gamma_1$---a contradiction to
property 3 of Definition~\ref{def:mat}.
Since between every two left children
there is a right child, and between every two right children
there is a left child, the result follows.
\end{proof}

By construction every vertex of $T$ is an isothetic drawing of the Horton set.  
The main idea behind the proof of the lower bound on the size of isothetic drawings
of the Horton set is to lower bound the size of these
drawings in terms of the size of their children. We define some parameters
on the vertices of $T$, that make this idea more precise. 

Let $2 \le t \le k$ be an integer. Let $\ell_1, \ell_2, \ell_3$ and $\ell_4$ be four vertical lines sorted from left to right, such that:
\begin{itemize}
 \item All of them are contained in the interior of $R$.
 
 \item There are exactly $2^{k-t}$
points of $P$ between both pairs ($\ell_1,\ell_2$) and $(\ell_3,\ell_4$). 
\end{itemize}

Let $Q$ be a vertex of $T$ with more than two points.
For each of the $\ell_i$,
we define two parameters of $Q$.
Let $\gamma_D(Q)$ be the line defined by the leftmost descendant
of $Q$ in $T_1$. Let $\gamma_U(Q)$ be the line defined by the rightmost descendant
of $Q$ in $T_1$. Note that $Q$ is bounded from below by $\gamma_D(Q)$ and from above by
$\gamma_U(Q)$ (Lemma~\ref{lem:no_cut}). Let $Q_L$ and $Q_R$ be the left and right children of $Q$, respectively.
Define $\operatorname{width}_i(Q)$ as the distance
between the points $\gamma_D(Q) \cap \ell_i$ and $\gamma_U(Q) \cap \ell_i$, and 
$\operatorname{girth}_i(Q)$ as the distance between the points 
$\gamma_U(Q_L) \cap \ell_i$ and $\gamma_D(Q_R) \cap \ell_i$; see Figure~\ref{fig:width_girth}. 

\begin{figure}
   \centering
\includegraphics[width=0.7\textwidth]{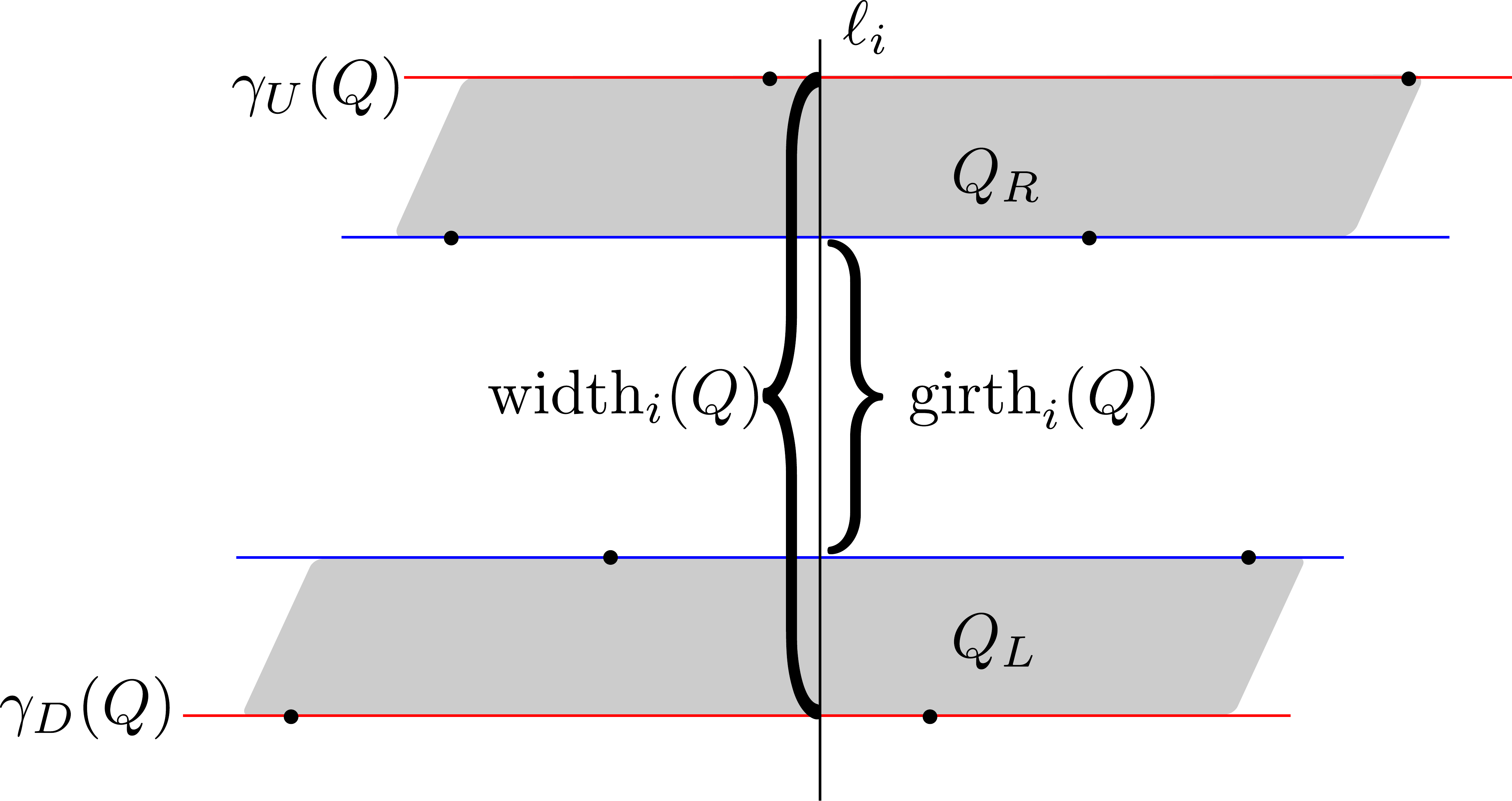}
\caption{The bounding lines of $Q$, together with its width and girth.}
	\label{fig:width_girth}
\end{figure}

We lower bound the girth of a vertex of $T$ in terms
of the girth of one of its children. This bound is expressed in Lemma~\ref{lem:lower_bound}.
Before proceeding we need one more definition. Let $Q$ be a vertex of $T$ with more than two points and let $P(Q)$ be its parent. 
If $Q$ is the left child of $P(Q)$, let $S(Q)$ be the right child of $Q$; otherwise let $S(Q)$ be the
left child of $Q$.

\begin{figure}
\begin{center}
\includegraphics[width=0.5\textwidth]{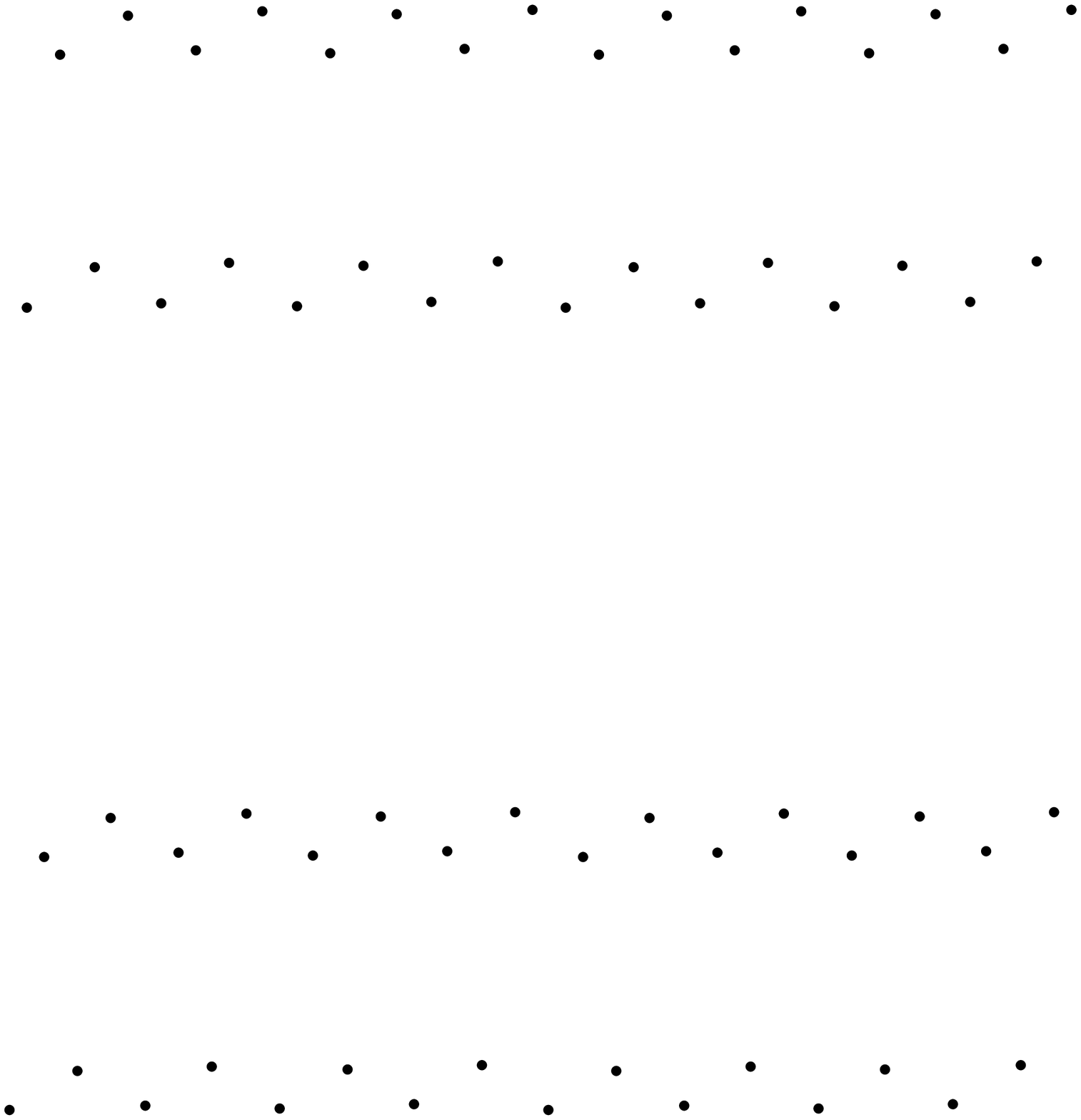}
\end{center}
\caption{Schematic depiction of the proof of Lemma~\ref{lem:lower_bound}.}
\label{fig:push}
\end{figure}

\begin{lemma}
\label{lem:lower_bound}
  Let $Q$ be a vertex at the $l$-th level of $T$, for some $t< l< k$.
   If the distance between $\ell_1$ and
  $\ell_{2}$ is $d_1$, and the distance between $\ell_{3}$ and
  $\ell_{4}$ is $d_2$, then:

\begin{itemize}
  \item[(1)] $\operatorname{girth}_1(P(Q)) \geq  \left( \frac{(d_1)^2}{(d_1+d_2)d_2} \right ) 2^{l-t-1} \operatorname{girth}_4(Q)-\operatorname{width}_1(S(Q))$ and,
  
  \item[(2)] $\operatorname{girth}_4(P(Q)) \geq  \left( \frac{(d_2)^2}{(d_1+d_2)d_1} \right ) 2^{l-t-1} \operatorname{girth}_1(Q)-\operatorname{width}_4(S(Q))$.

\end{itemize}

\end{lemma}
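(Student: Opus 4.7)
I prove inequality (1); inequality (2) is the mirror-image statement, obtained by swapping left and right throughout and exchanging the roles of $(\ell_1,\ell_2,d_1)$ with $(\ell_4,\ell_3,d_2)$. Assume $Q$ is the left child of $P(Q)$, so that $S(Q)=Q_R$ and the sibling $Q':=P(Q)_R$ is high above $Q$ by Definition~\ref{def:mat}(3). The rightmost descendant of $Q$ in $T_1$ is obtained by always taking right children from $Q$ downward, and since the first such step lands in $Q_R=S(Q)$, we have $\gamma_U(Q)=\gamma_U(S(Q))$. Because $|\gamma_U(S(Q))(\ell_1)-\gamma_D(S(Q))(\ell_1)|=\operatorname{width}_1(S(Q))$, one obtains
\[
\operatorname{girth}_1(P(Q)) \;\ge\; \bigl(\gamma_D(Q')(\ell_1)-\gamma_D(S(Q))(\ell_1)\bigr)\;-\;\operatorname{width}_1(S(Q)),
\]
so the task reduces to showing $\gamma_D(Q')(\ell_1)-\gamma_D(S(Q))(\ell_1) \ge \frac{d_1^{\,2}}{(d_1+d_2)d_2}\cdot 2^{\,l-t-1}\cdot\operatorname{girth}_4(Q)$.

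Next I would exhibit a single point of $Q'$ witnessing a large vertical excess over $\gamma_D(S(Q))$ near the right slab. Let $a,b\in Q'$, with $a$ to the left of $b$, be the two points defining $\gamma_D(Q')$. The high-above property of $P(Q)$ forces both $a$ and $b$ to lie strictly above $\gamma_D(S(Q))$, but the argument must be quantitative. Locating $a$ via the index arithmetic of $T$, its $x$-coordinate should fall into or near the right slab $[\ell_3,\ell_4]$, and the key claim is $y_a - \gamma_D(S(Q))(x_a)\ge 2^{\,l-t-1}\operatorname{girth}_4(Q)$. The factor $2^{\,l-t-1}$ arises from the following stacking: of the $2^{k-t}$ points of $P$ in the slab $[\ell_3,\ell_4]$, exactly $2^{\,l-t-1}$ belong to $S(Q)=Q_R$; Lemma~\ref{lem:no_cut} vertically stacks the level-$1$ descendants of $Q_R$ at $\ell_4$; and iterating the high-above relation through each of the $l-t-1$ intermediate levels of $T$ inside $Q_R$ doubles the vertical excess at each such level, starting from $\operatorname{girth}_4(Q)$.

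Finally I would pull the excess back to $\ell_1$ along $\gamma_D(Q')$. The $x$-coordinate of $a$ is near the right slab while $b$ lies further to its right, and the worst-case slope of $\gamma_D(Q')$, constrained by the fact that both $a$ and $b$ are above $\gamma_D(S(Q))$, yields a similar-triangles scaling. Combining the widths $d_1$ and $d_2$ of the two slabs together with the total reach $d_1+d_2$ produces exactly the factor $d_1^{\,2}/[(d_1+d_2)d_2]$ when transporting the excess at $x_a$ back to $\ell_1$. Multiplying by the $2^{\,l-t-1}\operatorname{girth}_4(Q)$ excess completes (1), and (2) follows by the symmetric argument.

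The main obstacle is the stacking step producing the factor $2^{\,l-t-1}$: it requires accumulating the girth $\operatorname{girth}_4(Q)$ across each of the $l-t-1$ intermediate levels of $T$ inside $Q_R$, using Lemma~\ref{lem:no_cut} and the nested high-above relations to double the vertical excess at each level. The accompanying index-arithmetic bookkeeping—identifying the exact $x$-coordinates of $a$ and $b$ from the tree $T$, and verifying that they fall in the intended regions relative to $\ell_1,\ldots,\ell_4$—is the other delicate ingredient.
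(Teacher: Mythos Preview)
Your reduction in the first paragraph is sound: since $\gamma_U(Q)=\gamma_U(S(Q))$, one indeed has
\[
\operatorname{girth}_1(P(Q))=\bigl(\gamma_D(Q')(\ell_1)-\gamma_D(S(Q))(\ell_1)\bigr)-\operatorname{width}_1(S(Q)),
\]
so the task is exactly the lower bound you state. From there, however, the plan diverges from what actually works, in two concrete ways.

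First, your localisation of the witness point $a$ is wrong. The line $\gamma_D(Q')$ is defined by the \emph{leftmost} level-$1$ descendant of $Q'$; since that descendant is a left child, it is left-to-right crossing in the sense of the discussion before Lemma~\ref{lem:no_cut}. Hence its left endpoint $a$ lies to the \emph{left of $\ell_1$}, not in or near the slab $[\ell_3,\ell_4]$. So you cannot read off any useful vertical excess of $a$ over $\gamma_D(S(Q))$ at a location near the right slab; the excess you want is already at $\ell_1$, and the question is how to produce it there.

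Second, and more seriously, the mechanism you propose for the factor $2^{\,l-t-1}$ does not hold. There is no reason the high-above relation ``doubles the vertical excess'' at each of the $l-t-1$ intermediate levels inside $Q_R$: the girths at different levels are a priori unrelated, and Lemma~\ref{lem:no_cut} only gives a non-crossing order, not any quantitative doubling. The paper obtains $2^{\,l-t-1}$ by a completely different, single-step argument: by pigeonhole, two consecutive points of $Q_L$ in $[\ell_3,\ell_4]$ are at horizontal distance at most $d_2/2^{\,l-t-1}$; the point of $Q_R$ sandwiched between them sits at vertical distance at least $\min\{\operatorname{girth}_3(Q),\operatorname{girth}_4(Q)\}$ above the lower one, so the line $\varphi$ through these two points of $Q$ has slope of magnitude at least $\frac{d_1}{(d_1+d_2)d_2}\,2^{\,l-t-1}\operatorname{girth}_4(Q)$. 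Then one uses ``$Q'$ high above $Q$'' \emph{once}: the leftmost point of $\gamma_D(Q')$, being to the left of $\ell_1$ and above $\varphi$, forces $\gamma_D(Q')\cap\ell_1$ to lie above $\varphi\cap\ell_1$, and the steep slope of $\varphi$ yields the factor $\frac{d_1^{2}}{(d_1+d_2)d_2}\,2^{\,l-t-1}\operatorname{girth}_4(Q)$ at $\ell_1$. Replace your stacking paragraph with this pigeonhole-plus-steep-line step and the proof goes through.
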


\begin{proof}
We will prove inequality $(1)$; the proof of $(2)$ is analogous. 
Assume that $Q$ is the left child of $P(Q)$ and let $Q'$ be the right child of $P(Q)$; the case when $Q$ is the right child of $P(Q)$
can be proven with similar arguments. Note that is $S(Q)$ is the right child, $Q_R$, of $Q$.
 
Let $p_1'$ and $p_2'$ be two consecutive points in $Q_L$ lying between $\ell_{3}$
and $\ell_4$ at a horizontal distance of at most  
 $\Delta_x:=d_{2}/2^{l-t-1}$ from each other; such a pair exists as there  are $2^{l-t-1}$ points
of $Q_L$ between $\ell_{3}$ and $\ell_4$.
Let $p''$ be the point in $Q_R$ that lies between $p_1'$ and $p_2'$ (in the $x$-coordinate order).
Let $\varphi$ be the line through $p_2'$ and $p''$. 
Let $\Delta_y:=\min\left\{\operatorname{girth}_{3}(Q),\operatorname{girth}_4(Q)\right\}$; 
note that the slope of $\varphi$ is at most  $-\Delta_y/\Delta_x$.
Recall that by Lemma~\ref{lem:no_cut}, $\gamma_D(Q_R)$ and $\gamma_U(Q_L)$ do not
intersect between $\ell_1$ and $\ell_4$; this implies that $\operatorname{girth}_{3}(Q) \ge \frac{d_1}{d_1+d_2}\operatorname{girth}_{4}(Q)$,
in particular
 $\Delta_y \ge \frac{d_1}{d_1+d_2}\operatorname{girth}_{4}(Q)$.
Therefore, the slope of $\varphi$ is at most
\[-\Delta_y/\Delta_x=-\left ( \frac{d_1}{d_1+d_2}\operatorname{girth}_{4}(Q) \right )/\Delta_x=\frac{d_1}{(d_1+d_2)d_2}2^{l-t-1}\operatorname{girth}_{4}(Q).\]

Define the following points $q_1:=\gamma_D(Q_R) \cap \ell_{1}$,  $q_2:=\varphi \cap \ell_1$ 
and $q_3:=\gamma_D(Q') \cap \ell_{1}$ (see Figure~\ref{fig:push}).
Note that the leftmost
point of $\gamma_D(Q')\cap Q'$ is to the left of $\ell_1$; since this point
is above $\varphi$, $q_2$ cannot be above $q_3$.
Therefore, the distance from $q_1$ to $q_2$
is at most the distance from $q_1$ to $q_3$; the distance from $q_1$ to $q_3$ is precisely 
$\operatorname{girth}_1(P(Q))+ \operatorname{width}_1(S(Q))$. We now show that the distance
from $q_1$ to $q_2$ is at least 
$\frac{(d_1)^2}{(d_1+d_2)d_2} 2^{l-t-1} \operatorname{girth}_4(Q)$---this completes the proof of $(1)$.

Let $\varphi'$ be the line parallel to $\varphi$ and passing through the intersection
point of $\ell_{3}$ and $\gamma_D(Q_R)$. Note that $\varphi'$ is below
$\varphi$. Therefore, the distance from $q_1$ to $q_2$ is at least the
distance of $q_1$ to the intersection point 
of $\varphi'$: this is at least $d_1(\Delta_y/\Delta_x)=\frac{(d_1)^2}{(d_1+d_2)d_2} 2^{l-t-1} \operatorname{girth}_4(Q)$.
\end{proof}  

Two obstacles prevent us from directly applying Lemma~\ref{lem:lower_bound}. One is that the difference
between $d_1$ and $d_2$  may be too big and in consequence $\frac{(d_1)^2}{(d_1+d_2)d_2}$
or $\frac{(d_2)^2}{(d_1+d_2)d_1}$ too small. This situation can be fixed with following Lemma.

\begin{lemma}\label{lem:ratio}
 For $t:=\lceil 2\log k\rceil$ and  $k\ge 16$, $P$ has size at least $n^{\frac{1}{2} \log n}$ or $\ell_1, \ell_2, \ell_3, \ell_4$ can 
 be chosen so that the ratio between $d_1$ and $d_2$ is at least $1/2$
 and at most $2$.
\end{lemma}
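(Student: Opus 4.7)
The plan is to prove the contrapositive: assuming that for every valid placement of the four vertical lines the ratio $d_1/d_2$ falls outside $[1/2,2]$, I will conclude that $P$ has size at least $n^{\frac{1}{2}\log n}$.

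First, I partition the $n/2-2$ points of $P$ lying strictly inside $R$ into $m:=\lfloor(n/2-2)/2^{k-t}\rfloor$ disjoint \emph{blocks} of $2^{k-t}$ consecutive points (in $x$-order). Since $t=\lceil 2\log k\rceil\le 2\log k+1$, a direct calculation gives $m\ge k^2/2-1$ for every $k\ge 16$. For each block $B_i$, let $w_i$ be the $x$-distance between its leftmost and rightmost points. Since all $x$-coordinates are distinct integers, $w_i\ge 2^{k-t}-1$, and using $t\le 2\log k+1$ one checks $w_i\ge n/(4k^2)$.

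Next, I translate widths into a size bound. The blocks are disjoint and contained in $R$, so $\sum_i w_i$ is at most the $x$-width of $R$, which in turn is at most $2\,\mathrm{size}(P)$. If no two disjoint blocks had widths in ratio $[1/2,2]$, then sorting $w_{(1)}\le\cdots\le w_{(m)}$ every consecutive pair would satisfy $w_{(i+1)}>2w_{(i)}$, whence $w_{(m)}>2^{m-1}(2^{k-t}-1)$. Combining $w_{(m)}\le 2\,\mathrm{size}(P)$ with $m\ge k^2/2-1$ and $2^{k-t}-1\ge n/(4k^2)$ gives $\mathrm{size}(P)>2^{k^2/2}\cdot n/(32k^2)$; since $k\ge 16$ implies $n=2^k\ge 32k^2$, this forces $\mathrm{size}(P)\ge n^{\frac{1}{2}\log n}$, establishing the contrapositive in this branch.

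In the remaining case, two disjoint blocks $B_i,B_j$ have widths in ratio $[1/2,2]$; I place $\ell_1,\ell_2$ just outside the leftmost and rightmost points of $B_i$ and $\ell_3,\ell_4$ likewise around $B_j$. Then exactly $2^{k-t}$ points of $P$ lie between each pair, and by shrinking the outer offsets the quantities $d_1,d_2$ can be made arbitrarily close to $w_i,w_j$, so the ratio $d_1/d_2$ inherits the bound on $w_i/w_j$. I expect the main difficulty to be purely arithmetic bookkeeping---in particular verifying the two threshold inequalities $m\ge k^2/2-1$ and $n\ge 32k^2$ exactly at $k=16$---rather than anything conceptual.
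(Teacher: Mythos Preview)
Your proposal is correct and follows essentially the same route as the paper. The paper places $2^{t-1}$ vertical lines $\varphi_1,\dots,\varphi_{2^{t-1}}$ inside $R$ with exactly $2^{k-t}$ points of $P$ between consecutive ones, and applies the identical doubling argument to the consecutive gaps $\Delta_i$: if no two $\Delta_i$ are within a factor of two of each other, then the largest gap is at least $2^{k-t-1}\cdot 2^{2^{t-1}-2}\ge n^{\frac{1}{2}\log n}$. The only cosmetic difference is that the paper's gaps $\Delta_i$ serve \emph{directly} as the candidate values of $d_1,d_2$ (one simply takes $\ell_1=\varphi_i$, $\ell_2=\varphi_{i+1}$, etc.), so no limiting ``shrink the outer offsets'' step is needed; you instead work with block widths $w_i$ and then place the four lines just outside the two chosen blocks. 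Your framing as a contrapositive is slightly tangled---the ``remaining case'' really shows that the contrapositive hypothesis forces the first branch---but the logic is sound and the arithmetic checks out at $k=16$.
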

\begin{proof}
Let $\varphi_1, \dots, \varphi_{2^{t-1}}$ be consecutive vertical lines such that:

\begin{itemize}
\item all of them lie in the interior of $R$ and,

\item between every pair of two consecutive lines $(\varphi_i, \varphi_{i+1})$ there are exactly $2^{k-t}$ points of $P$.
\end{itemize}

For $1 \le i < 2^{t-1}$, let $\Delta_i$ be the distance between $\varphi_i$ and  $\varphi_{i+1}$;
let $\Delta_1' \le \Delta_2'\le \dots \le  \Delta_{2^{t-1}-1}'$ be these distances sorted by size. 
We look for a pair $\Delta_i' \le \Delta_j'$, such that one is at most two times the other.
Suppose there is no such pair; then $\Delta_{i+1}'\ge 2\Delta_i'$. Since between the two lines
defining $\Delta_1'$ there are exactly $2^{k-t}$ points of $P$, and no three
of them have the same integer $x$-coordinate, $\Delta_1' \ge 2^{k-t-1}$.
Therefore,
\[\Delta_{2^{t-1}-1}'\ge  2^{k-t-1}\cdot 2^{2^{t-1}-2}\ge 2^{\frac{1}{2}k^2+k-t-3} \ge 2^{\frac{1}{2}k^2+k-2\log(k)-4} \ge n^{\frac{1}{2}\log n}. \]
The latter part of the inequality follows from our assumption that $k \ge 16$. 
Therefore, if there is no such pair, $P$ has size at least  $n^{\frac{1}{2} \log n}$.
 
\end{proof}


The second obstacle
is that the second term in the right hand sides of inequalities (1) and (2)
of Lemma~\ref{lem:lower_bound} may be too large. In this case, we  prune
$T$ to get rid of vertices of large width; this is done by choosing 
an integer $l \le k-1$ and then removing from
$P$ all the points are contained in either: all the vertices
of $T_l$ that are a left child to their parent, or all the 
the vertices of $T_l$ that are a right child to their parent 
(see Figure~\ref{fig:T_pruned}). We call this operation 
\emph{pruning} the $l$-th level of $T$. The resulting set is a drawing
of the Horton set, as shown by the following lemma.

\begin{figure}
\begin{center}
\includegraphics[width=1.0\textwidth]{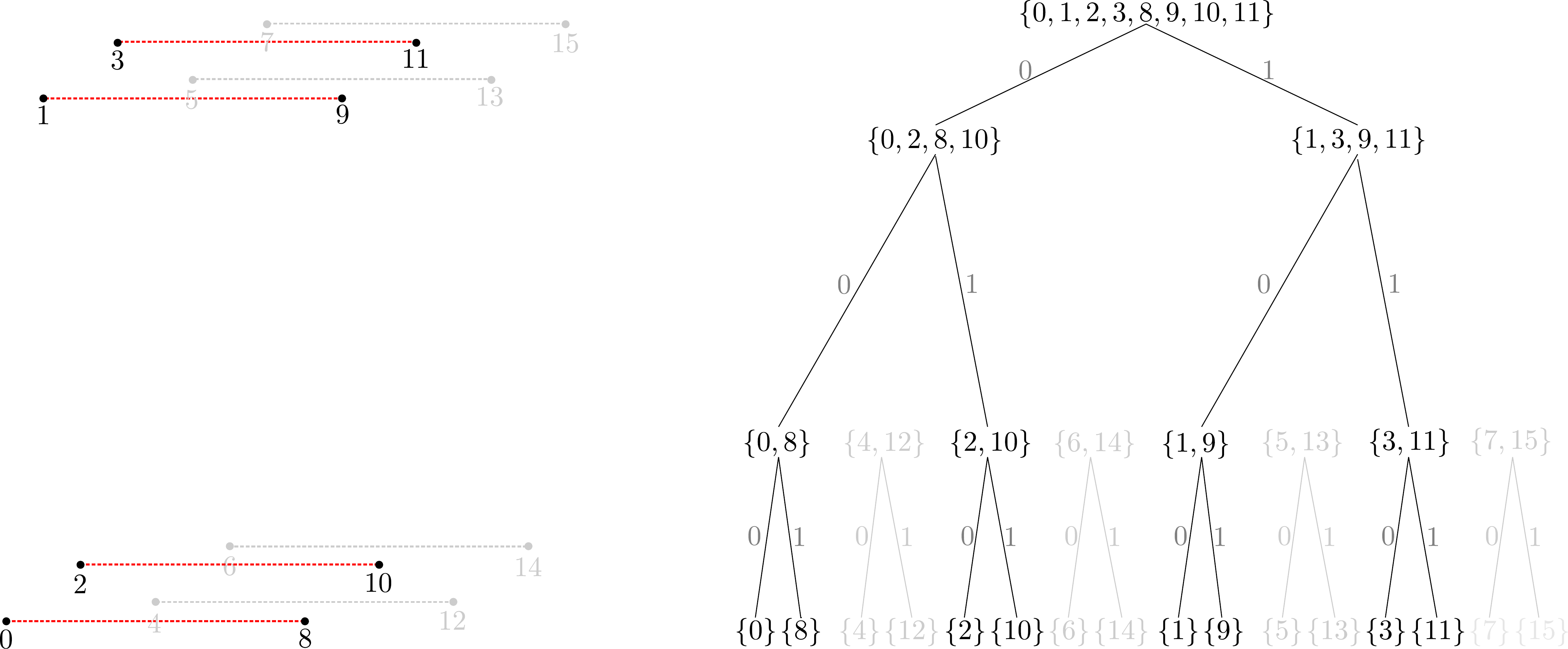}
\end{center}
\caption{$P$ and $T$ after the removal of the vertices of $T_1$ that 
are a right child to their parent.}
\label{fig:T_pruned}
\end{figure}

\begin{lemma}\label{lem:prune}
Let $P'$ be the subset of $P$ that results from pruning the $l$-th level of $T$.
Then:

\begin{itemize}
 \item[(1)]$P'$ is an isothetic drawing of the Horton set of $n/2$ points.
 
 \item[(2)] Suppose that $l \le k-3$.  Let $T'$ be the tree associated to $P'$, and $Q'$ be any vertex at the $l$-th level of $T'$ (for
 some $l'>l$).
 Then there exist a vertex
  $Q$  at the $(l'+1)$-level of $T$ that contains $Q'$. Moreover, $S(Q')\subset S(Q)$.
\end{itemize}
\end{lemma}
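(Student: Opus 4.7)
I would begin with a bit-pattern description of $T$: for $0 \le l \le k$, each vertex at level $l$ of $T$ is precisely $\{p_i \in P : \text{bits } l, l{+}1, \ldots, k{-}1 \text{ of } i \text{ match some prescribed string}\}$, and the vertex is a left (resp.\ right) child of its parent exactly when the prescribed bit in position $l$ equals $0$ (resp.\ $1$). Consequently, pruning the $l$-th level by removing all left-child vertices leaves exactly the $n/2$ points $p_i$ with bit $l$ of $i$ equal to $1$; the other choice is symmetric.

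For part (1), I would argue by induction on $k$ (the base case being trivial). In the inductive step, if $l=0$ then $P' = P_{\text{odd}}$, which is an isothetic Horton set by Definition~\ref{def:mat}. If $l \ge 1$, a routine bit-shift computation identifies $P'_{\text{even}}$ with the result of pruning the $(l{-}1)$-st level of the tree of $P_{\text{even}}$, and analogously for $P'_{\text{odd}}$; by the inductive hypothesis each is an isothetic Horton set of $n/4$ points. The ``high above'' condition is inherited for free: since $P'_{\text{odd}} \subset P_{\text{odd}}$ and $P'_{\text{even}} \subset P_{\text{even}}$, the property from Definition~\ref{def:mat}(3) applied to $P$ already places every line through two points of $P'_{\text{odd}}$ above all of $P'_{\text{even}}$, and symmetrically with the roles reversed.

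For part (2), I would use the explicit bijection $p'_j \leftrightarrow p_i$ induced by the pruning, in which $i$ is obtained from $j$ by inserting a $1$ in bit position $l$. Under this map, a level-$l'$ vertex $Q'$ of $T'$, which fixes bits $l', \ldots, k{-}2$ of $j$, translates to the constraints ``bit $l$ of $i$ equals $1$'' together with prescribed values of bits $l'{+}1, \ldots, k{-}1$ of $i$; crucially, $l' > l$ ensures the bit-shift applies uniformly to every fixed position. The level-$(l'{+}1)$ vertex $Q$ of $T$ determined by this second block of constraints alone then contains $Q'$, and the hypothesis $l \le k-3$ yields $l' \le k-2$, so $P(Q)$, $P(Q')$, $S(Q)$ and $S(Q')$ are all defined.

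For the inclusion $S(Q') \subset S(Q)$, note that the left-vs-right bit of $Q$ (bit $l'{+}1$ of $i$) and the left-vs-right bit of $Q'$ (bit $l'$ of $j$) coincide under the bit-insertion map, so $Q$ and $Q'$ are simultaneously left children or simultaneously right children of their parents. In either case, the bit constraints defining $S(Q')$ are exactly those defining $S(Q)$ together with the single extra condition ``bit $l$ of $i$ equals $1$'', which gives the claimed containment. The only real care needed throughout is tracking the off-by-one between the levels and indexings of $T$ and $T'$; once the bit dictionary is in place, every claim reduces to a straightforward comparison of index sets, and I expect this bookkeeping to be the only mild obstacle.
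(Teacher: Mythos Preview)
Your overall plan is sound and close to the paper's own argument: for (1) both you and the paper reduce to pruning the subtrees rooted at $P_{\text{even}}$ and $P_{\text{odd}}$ and then invoke the inherited ``high above'' relation; for (2) your explicit bit-insertion bijection is a clean and somewhat more direct alternative to the paper's procedural description of how $T'$ is obtained from $T$.

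However, your bit bookkeeping is systematically reversed, and this is more than the off-by-one you anticipate. In the paper's level convention (level $l$ means $2^l$ points per vertex), a vertex at level $l$ consists of the points $p_i$ whose \emph{low-order} bits $0,1,\ldots,k-l-1$ are prescribed, not bits $l,\ldots,k-1$; and such a vertex is a left child precisely when bit $k-l-1$ of $i$ is $0$. Consequently: pruning level $l$ by removing left children keeps the points with bit $k-l-1$ equal to $1$; the case that yields $P'=P_{\text{odd}}$ is $l=k-1$, not $l=0$ (indeed the paper takes $l=k-1$ as its base case and inducts on $s=k-l$); pruning level $l$ of $T$ restricts to pruning level $l$, not level $l-1$, of the trees of $P_{\text{even}}$ and $P_{\text{odd}}$; and in your bijection $i$ is obtained from $j$ by inserting a $1$ at bit position $k-l-1$, not at position $l$. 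Once these indices are corrected your argument goes through exactly as you describe, but as written several of the concrete assertions are false.
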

\begin{proof}

Assume without loss of generality that the left children are removed when pruning
$T$.
If $l=k-1$, (2) holds trivially, and (1) holds because in that case $P'=P_{\textrm{odd}}$.
Assume that $l\le k-2$, and let $s:=k-l$; we proceed by induction on $s$.

Note that  $P_{\textrm{even}}$ and $P_{\textrm{odd}}$ are each an
isothetic drawing of the Horton set of $n/2$ points. Moreover, their corresponding
trees, $T_{\textrm{even}}$ and $T_{\textrm{odd}}$, are the subtrees of $T$ rooted at  $P_{\textrm{even}}$ and $P_{\textrm{odd}}$,
respectively. Therefore, when we prune the $l$-th level of $T$, we also prune the $l$-th level of $T_{\textrm{even}}$ and $T_{\textrm{odd}}$.
By induction and (1), this produces two isothetic drawings of the Horton set
of $n/4$ points; let $P_0''\subset P_{\textrm{even}} $ and $P_1'' \subset P_{\textrm{odd}} $ be these drawings, respectively.

We first prove that 

\MyQuote{$P'$ can be constructed from $P$ by, starting at $p_0$, alternatively removing
and keeping intervals of $2^{k-l-1}$ consecutive points of $P$.}

For $s=1$, this is
trivial since  $2^{k-l-1}=1$ and $P'=P_{\textrm{odd}}$. Thus, by induction,
$P_0''$ and $P_1''$ are constructed
from $P_{\textrm{even}}$ and $P_{\textrm{odd}}$ by, starting at their leftmost point,
alternatively removing and keeping intervals of $2^{k-l-2}$ consecutive
points of $P_{\textrm{even}}$ and $P_{\textrm{odd}}$, respectively. 
Let $I_1', \dots, I_{2^{l+1}}' \subset P_{\textrm{even}}$   and 
$J_1', \dots, J_{2^{l+1}}' \subset P_{\textrm{odd}}$ be these
intervals (in order). Finally, $(\ast)$ follows by letting $I_i:=I_i' \cup J_i'$.

We now prove \emph{1} and \emph{2}.
\begin{itemize}
 \item[\emph{(1)}] 
Note that $(\ast)$ implies that $P_{\textrm{even}}'=P_0''$ and $P_{\textrm{odd}}'=P_1''$.
Thus $P_{\textrm{even}}' \subset P_{\textrm{even}}$ and
$P_{\textrm{odd}}' \subset P_{\textrm{odd}}$; in particular $P_{\textrm{odd}}'$ is high above $P_{\textrm{even}}'$.
Therefore, $P'$ is an isothetic drawing of the Horton set of $n/2$ points.

\item[\emph{(2)}]

Consider the following algorithm. Remove from $T$ the subtrees rooted at the vertices in the $l$-th level of $T$ that
are a left child to their parent; afterwards, remove from each vertex of $T$ the points in $P \setminus P'$. After this last 
  step, each vertex at the $l$-th level of $T$  that was not removed is equal to its parent---producing a loop;
  remove these loops. We claim that this algorithm produces $T'$. For $s=1$, this follows from $(\ast)$. 
  Let  $T_{\textrm{even}}'$ and $T_{\textrm{odd}}'$ be the left and right
  subtrees of the root of $T'$, respectively.
  By induction $T_{\textrm{even}}'$ and $T_{\textrm{odd}}'$ can  constructed from
  $T_{\textrm{even}}$ and $T_{\textrm{odd}}$ with the above algorithm, respectively.
  Since the root of $T'$ is precisely the root of $T$ minus the points in $P \setminus P'$,
  the algorithm produces $T'$.
  
  Now, suppose that $l \le k-3$ and let $Q'$ be a vertex at the $l'$-th level of $T'$, for
  some $l'>l$. By
  the algorithm, there is a vertex $Q$ such that $Q'=Q\setminus (P \setminus P')$;
  this vertex is in the $(l'+1)$-level of $T$. Finally, also by the algorithm
  we have that $S(Q') \subset S(Q)$.
\end{itemize}
\end{proof}

We are now ready to prove our lower bound on the size of isothetic drawings of the Horton set.

\begin{theorem}\label{thm:lower_isothetic}
For a sufficiently large value of $k$, every isothetic drawing of the Horton set of $n=2^k$ points has size
at least $n^{\frac{1}{8} \log n}$.
\end{theorem}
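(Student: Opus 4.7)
My plan is to iteratively apply Lemma~\ref{lem:lower_bound} up the tree $T$ to derive a lower bound on the girth at the root, and then observe that this girth is at most twice the size of the drawing. The target exponent $\tfrac{1}{8}\log n$ suggests that the effective depth over which we can iterate the recursion is about $k/2$ levels, since $(k/2)^2/2 = k^2/8$.

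First, I would set $t := \lceil 2\log k \rceil$ and invoke Lemma~\ref{lem:ratio}: either $P$ already has size at least $n^{(1/2)\log n}$, and we are done with room to spare, or we obtain vertical lines $\ell_1,\ell_2,\ell_3,\ell_4$ whose inter-pair distances satisfy $1/2 \leq d_1/d_2 \leq 2$. In this second case, both coefficients $(d_i)^2/((d_1+d_2)d_j)$ appearing in Lemma~\ref{lem:lower_bound} are at least $1/6$, so each application of the lemma at level $l$ multiplies the relevant girth by at least $(1/6)\cdot 2^{l-t-1}$. I would then apply Lemma~\ref{lem:prune} repeatedly to obtain a smaller isothetic Horton set $P^{(s)}$ of $n/2^s$ points with $s\approx k/2$, whose associated tree $T^{(s)}$ has depth about $k/2$; by Lemma~\ref{lem:prune}(2), each sibling $S(Q')$ in $T^{(s)}$ is contained in a sibling $S(Q)$ of the original tree, giving control of the widths that appear in the recursion.

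Second, I iterate Lemma~\ref{lem:lower_bound} from a base level $l_0$ just above $t$ up to the root of $T^{(s)}$, starting with the observation that girth at level $l_0$ is at least $1$ by integrality of the coordinates and the general-position assumption. Telescoping the recursion, the leading product $\prod_{l}(1/6)\cdot 2^{l-t-1}$ over the $k/2 - O(\log k)$ iterations evaluates to approximately $n^{(1/8)\log n}$. The width subtractions enter the final girth as a weighted sum, with each $\operatorname{width}_i(S(Q))$ at level $l$ multiplied by the product of growth factors at higher levels; since those weights decrease geometrically when moved to the lower levels, the total penalty should be at most a constant factor times $\mathrm{size}(P)$.

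The main obstacle is precisely this last point: controlling the cumulative width subtraction, since each $\operatorname{width}_i(S(Q))$ is a priori as large as $2\,\mathrm{size}(P)$. Pruning is the essential tool: it caps the depth of the iteration at roughly $k/2$ (so the sum of weighted penalties is dominated by a single leading term) and, via the inclusion $S(Q')\subset S(Q)$ of Lemma~\ref{lem:prune}(2), ensures that the siblings appearing in $T^{(s)}$ are no worse than those in $T$. Once the weighted penalty is shown to be at most, say, half of the leading product, we obtain $\operatorname{girth}(P^{(s)}) \geq c\, n^{(1/8)\log n}$ for some absolute constant $c > 0$, and since $\operatorname{girth}(P^{(s)})$ is bounded above by $2\,\mathrm{size}(P^{(s)}) \leq 2\,\mathrm{size}(P)$, the theorem follows for all sufficiently large $k$.
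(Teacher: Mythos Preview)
Your plan has the right scaffolding (Lemma~\ref{lem:ratio}, then iterate Lemma~\ref{lem:lower_bound} up the tree), but the part you flag as ``the main obstacle''---controlling the accumulated width subtractions---is a genuine gap, and the mechanism you propose does not close it. When you telescope $g_{m}\ge c_m g_{m-1}-w_{m-1}$ with $c_m=\tfrac{(d_i)^2}{(d_1+d_2)d_j}\,2^{m-t-1}$, the penalty you must subtract at the end is $\sum_{j}\bigl(\prod_{i>j}c_i\bigr)\,w_j$. The coefficients $\prod_{i>j}c_i$ are \emph{largest} for the lowest levels (not smallest, as you state), and each $w_j=\operatorname{width}(S(Q))$ is a priori of order $\operatorname{size}(P)$. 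So the penalty is of order $(\prod_i c_i)\cdot\operatorname{size}(P)$, which swamps your main term $(\prod_i c_i)\cdot g_{l_0}$. Repeated pruning to depth $k/2$ only reduces the \emph{number} of terms, and the inclusion $S(Q')\subset S(Q)$ from Lemma~\ref{lem:prune} only says the widths are no larger than before; neither gives you the quantitative bound $w_j\ll c_{j+1}g_j$ that the recursion needs.

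Separately, your starting claim ``girth at level $l_0$ is at least $1$ by integrality'' is not justified: $\operatorname{girth}_i(Q)$ is the distance between two intersection points of lines with a vertical line $\ell_i$, and these need not be lattice points. The paper handles both issues with one idea you are missing: a \emph{threshold/dichotomy} argument. One fixes a growing threshold $\tau(l)=2^{(l-t-6)(l-t-7)/2}/D$ and takes the \emph{largest} level $l$ at which some sibling width exceeds $\tau(l)$ (such an $l$ exists by a Pick--theorem area bound giving $\max\{\operatorname{width}_1,\operatorname{width}_4\}\ge 1/D$ at level $t+1$). If $l=k$ the bound follows immediately. Otherwise one prunes \emph{once} at level~$l$; by maximality and Lemma~\ref{lem:prune}(2), every sibling width above level $l$ in the pruned tree is \emph{below} its threshold, so at each step $w_{m-1}\le \tau(m-1)\le \tfrac12 c_m g_{m-1}$ and the recursion runs cleanly. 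Crucially, the recursion is seeded not by ``$\ge 1$'' but by $\operatorname{girth}_1(P(R)')\ge \operatorname{width}_1(S(R))\ge \tau(l)$, i.e.\ the very width that witnessed the threshold. This is the missing mechanism in your proposal.
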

\begin{proof}
Set $t:=\lceil 2\log k\rceil$ and assume that $k\ge 16$. By Lemma~\ref{lem:ratio} 
$\ell_1, \ell_2, \ell_3$ and $\ell_4$ can be chosen so that, the ratio of the distance
between $d_1$ and $d_2$ is at least $1/2$ 
and at most $2$.  Without loss of generality assume that $d_1 \le d_2$.
Let $D$ be the distance between $\ell_1$ and $\ell_4$. We may assume that
\[D< n^{\frac{1}{8}\log n};\] as otherwise we are done.

Let $Q$ be a vertex in the $(t+1)$-th level of $T$. 
Note that between two consecutive points in every vertex at the $l$-th level
of $T$ there are exactly $2^{k-l}-1$ points of $P$. This trivially holds
for  $l=k$; it holds for smaller values of $l$, by induction
on $k-l$. In particular,
there are $(2^{t+1}-1)(2^{k-t-1}-1)+2^{t+1}-2=2^k-2^{k-t-1}-1$ points
of $P$ between the leftmost and rightmost point of $Q$.

This implies that 
there are exactly two points of $Q$ between $\ell_1$ and $\ell_{2}$,
and exactly two points of $Q$ between $\ell_{3}$ and $\ell_4$. 

Suppose that there
were less than two points of $Q$ between $\ell_1$ and $\ell_{2}$, then the number of points of $P$ would
be at least the sum of the following. 
\begin{itemize}
 \item The number of points of $P$ between $\ell_1$ and $\ell_{2}$; recall that this is
 equal to $2^{k-t}$.

\item The number
of points of $P$ between the leftmost and rightmost point of $Q$ that are not between  
$\ell_1$ and $\ell_{2}$;  since there are exactly $2^{k-t-1}-1$ points of $P$ between two consecutive points of $Q$,
and at most one point of $Q$ between $\ell_1$ and $\ell_2$, this is at least $(2^k-2^{k-t-1}-1)-2^{k-t-1}=2^k-2^{k-t}$.

\item Two, for the leftmost and rightmost point of $Q$.
\end{itemize}

In total this is at least $2^k+1=n+1$---a contradiction; similar arguments hold for $\ell_3$ and $\ell_4$.

Suppose that there are more than two points of $Q$ between $\ell_1$ and 
$\ell_2$, then the number of points of $P$ between $\ell_1$ and $\ell_2$ is at least
$2(2^{k-t-1})+3=2^{k-t}+3$; this is a contradiction to the assumption
that there are exactly $2^{k-t}$ points of $P$ between $\ell_1$
and $\ell_2$. The same argument holds for $\ell_3$ and $\ell_4$. 

The two points of $Q$  between $\ell_1$ and $\ell_2$, and the two points 
of $Q$  between $\ell_3$ and $\ell_4$, have integer coordinates. Therefore,
by Pick's theorem~\cite{pick} the area of their convex hull is at least
one. Since these points are contained in trapezoid bounded by $\gamma_D(Q)$, $\gamma_U(Q)$, $\ell_1$
and $\ell_4$, the area of this trapezoid is also at least one. But this area is at most 
$D(\operatorname{width}_1(Q)+\operatorname{width}_4(Q))/2$. Therefore

\begin{equation} \label{eq:w>=1}
 \max\{\operatorname{width}_1(Q),\operatorname{width}_4(Q)\} \ge 1/D.
\end{equation}

This bound also holds for every vertex at a level higher than $t+1$ 
(since all of these vertices contain vertices at the $t+1$-level as subsets).

Let $t < l \le k$ be the largest positive integer  such that
there exists a vertex $R$ in the $l$-th level of $T$ that satisfies:

\begin{equation}\label{eq:s(r)}
\max\{\operatorname{width}_1(S(R)),\operatorname{width}_4(S(R))\} \ge \frac{2^{(l-t-6)(l-t-7)/2}}{D}.
\end{equation}

Such an $l$ and $R$ exist since (\ref{eq:s(r)}) holds 
for every vertex at the $(t+6)$-th level of $T$. Indeed
if $Q$ is a vertex at the $(t+6)$-th level of $T$, then 
$S(Q)$ is in the $(t+5)$ level of $T$ and by (\ref{eq:w>=1}):

\[\max\{\operatorname{width}_1(S(Q)),\operatorname{width}_4(S(Q))\} \ge 1/D=\frac{2^{((t+6)-t-6)((t+6)-t-7)/2}}{D}.\]

Without loss of generality assume that 
$\operatorname{width}_1(S(R)) \ge (2^{(l-t-6)(l-t-7)/2})/D$ and that $R$ is a left
child.
We may assume that $l<k$, otherwise (\ref{eq:s(r)}) implies that $P$ has size at least 
$n^{\frac{1}{8}\log n}$ (for a sufficiently large value of $k$). 

We now apply Lemma~\ref{lem:prune} to prune $T$ of all the vertices
of large width (that, is that satisfy (\ref{eq:s(r)})).
Prune the $l$-th level of $T$ by removing all the vertices that are a left
child to their parent. Let $P'$ be the resulting point set
and $T'$ its corresponding tree. No vertex of $T'$ in a level higher than $l$ satisfies (\ref{eq:s(r)});
otherwise, by part $(2)$ of Lemma~\ref{lem:prune} there would be a vertex
at level of $T$ higher than $l$ that satisfies (\ref{eq:s(r)}). 

Let $(P(R)'=Q_l', Q_{l+1}',\dots, Q_{k-1}'=P')$ be the path from $P(R)'$
to the root of $T'$. We prove inductively for $l \le m \le k-1$, that:
\begin{align}
  \operatorname{girth}_{1}(Q_m') & \ge \frac{2^{(m-t-6)(m-t-7)/2}}{D} \, \mbox{if} \, m \equiv l \mod 2, \label{eq:g1}\\ 
  \operatorname{girth}_{4}(Q_m') & \ge \frac{2^{(m-t-6)(m-t-7)/2}}{D} \, \mbox{if} \, m \not \equiv l \mod 2 \label{eq:g4}
\end{align}
(\ref{eq:g1}) holds for $m=l$ since 
$\operatorname{girth}_1(Q_{l+1}')= \operatorname{girth}_1(P(R)')\ge \operatorname{width}_1 (S(R)) \ge(2^{(l-t-6)(l-t-7)/2})/D$.
Assume that $m>l$ and that both (\ref{eq:g1}) and (\ref{eq:g4}) hold for smaller values of $m$.
Suppose that $m$ has the same parity as $l$.
Then by inequality (1) of Lemma~\ref{lem:lower_bound} and inequalities (\ref{eq:s(r)}) and (\ref{eq:g1}):

\begin{align*}
 \operatorname{girth}_{1}(Q_m') &\ge  \left( \frac{(d_1)^2}{(d_1+d_2)d_2} \right ) 2^{m-t-2} 
 \operatorname{girth}_4(Q_{m-1}')-\operatorname{width}_1(S(Q_{m-1}'))\\
&\ge 2^{m-t-5} \operatorname{girth}_4(Q_{m-1}')-\frac{2^{(m-t-7)(m-t-8)/2}}{D}\\
&\ge 2^{m-t-5}\frac{2^{(m-t-7)(m-t-8)/2}}{D}-\frac{2^{(m-t-7)(m-t-8)/2}}{D}\\
&\ge 2^{m-t-6}\frac{2^{(m-t-7)(m-t-8)/2}}{D} \\
&=\frac{2^{(m-t-6)(m-t-7)/2}}{D}
\end{align*}

Therefore $P'$ has size at least $\frac{2^{(k-t-7)(k-t-8)/2}}{D}$. This at least
$n^{\frac{1}{8}\log n}$, for a sufficiently large value of $k$. Since $P' \subset P$,
the result follows.
The proof when $m$ has different parity as $l$ is similar, but uses inequality (2) of  Lemma~\ref{lem:lower_bound}
instead.
\end{proof}

To prove the general lower bound we do the following. Take a drawing of the Horton set;
find a subset of half of its points, for which we know that there exists
a linear transformation that maps it into an isothetic drawing; afterwards, apply
Lemma~\ref{thm:lower_isothetic} to the image and use the obtained lower bound to lower bound
the size of original drawing.

\begin{theorem}\label{thm:lower_gen}
Every drawing of the Horton set of $n=2^k$ points has size at least $c \cdot n^{\frac{1}{24}\log (n/2)}$,
for a sufficiently large value of $n$ and some positive constant $c$.
\end{theorem}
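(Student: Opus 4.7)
The plan is to reduce the general case to Theorem~\ref{thm:lower_isothetic}, exactly as sketched in the paragraph preceding the theorem. Given an arbitrary drawing $P$ of the Horton set of $n=2^k$ points with integer coordinates bounded in absolute value by $S$, I would exhibit a subset $P' \subset P$ of $n/2$ points together with a linear transformation $T$ such that $T(P')$ is (a scaled copy of) an isothetic drawing of the Horton set of $n/2$ points. A natural candidate for $P'$ is one of the combinatorial halves of $P$ corresponding to $H_{\textrm{even}}^k$ or $H_{\textrm{odd}}^k$, which are identified from the order type of $P$ independently of its specific coordinates, and $T$ is pinned down by sending three suitable reference points of $P$ (arising from the Horton hierarchy of $P'$) to canonical positions.

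The cost of the transformation is controlled by Cramer's rule: if $T$ sends three points of $P$ whose coordinates have absolute value at most $S$ to fixed integer targets of bounded size, then, after clearing denominators to make $T$ an integer map, its entries are bounded by a polynomial in $S$. A routine estimate then yields $\operatorname{size}(T(P')) \le O(S^c)$ for an absolute constant $c$ that turns out to be $3$. Applying Theorem~\ref{thm:lower_isothetic} to $T(P')$ gives $\operatorname{size}(T(P')) \geq (n/2)^{\frac{1}{8}\log(n/2)}$, and combining the two estimates yields $S^3 \geq (n/2)^{\frac{1}{8}\log(n/2)}$, from which $S \geq c' \cdot n^{\frac{1}{24}\log(n/2)}$ follows for a suitable positive constant $c'$ and sufficiently large $n$.

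The hard part is the first step: producing a half-sized subset that admits a linear transformation to isothetic form. In a general drawing, the ``sort direction'' and the ``high above'' direction implicit in Definition~\ref{def:mat} may a priori differ at each level of the Horton tree, whereas an isothetic realization demands that these coincide with the $x$- and $y$-axes at every level simultaneously. One must exploit the rigidity of the Horton order type, together with the freedom gained by discarding one combinatorial half, to argue that a single affine map can unify these directions on the chosen half. Once this geometric reduction is established, the chain of size estimates above is essentially routine, and the factor $\tfrac{1}{3}$ drop in the exponent (from $\tfrac{1}{8}$ to $\tfrac{1}{24}$) is precisely the price paid for the polynomial blow-up incurred by the transformation.
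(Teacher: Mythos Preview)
Your overall strategy matches the paper's exactly: pass to a combinatorial half of the drawing, find a linear map making it isothetic, control the coordinate blow-up, and invoke Theorem~\ref{thm:lower_isothetic}. The arithmetic at the end is also right, and the factor-of-three loss in the exponent is indeed the price of the transformation.

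The gap is precisely where you flag it, and your proposal does not close it. You suggest pinning down $T$ by sending three reference points to canonical positions and then appealing to Cramer's rule for the size estimate, but this does not force the image to be isothetic: fixing three points determines an affine map, yet gives no control over the \emph{order} of the remaining $n/2-3$ points along the new $x$-axis. Being isothetic requires that the $x$-order of the image agree with the combinatorial index order $p_1',p_3',p_5',\dots$ (and hence with the induced order at every level of the recursion), and an arbitrary affine normalisation of three points will not achieve this. Your final paragraph correctly states what must be shown but supplies no argument.

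The paper resolves this with a different and concrete mechanism. From the order type one reads off that the clockwise angular order of $P_{\textrm{odd}}'$ around $p_0'$ is exactly $(p_1',p_3',\dots)$, and that $p_0'$ lies in an unbounded cell of the line arrangement of $P_{\textrm{odd}}'$. Hence there is a direction $\vec d$ in which the orthogonal projection of $P_{\textrm{odd}}'$ is already correctly ordered. One then rotates $\vec d$ until it is wedged between two adjacent directions $v',v''$ determined by pairs of points of $P'$, and sets $v=v'+v''$; this is an \emph{integer} vector of length $O(S)$, and the change of basis to $\{v,v^{\perp}\}$ (followed by scaling by $a^2+b^2$) yields an isothetic integer drawing of $P_{\textrm{odd}}'$. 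The ``high above'' conditions at every level then come for free, since once the $x$-order is correct they are pure order-type statements. This is the missing idea: the transformation is dictated by a good \emph{projection direction}, not by three anchor points, and that direction is extracted from the angular order of $P_{\textrm{odd}}'$ around the discarded point $p_0'$.
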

\begin{proof}
Let $P'$ be a (not necessarily isothetic) drawing of the Horton set of $n$ points. 
  As $P$ and $P'$ have the same order type we can label $P'$ with the
 same labels as $P$, such that corresponding triples of points in $P$ and $P'$
 have the same orientation. Let $\{p_0',\dots,p_{n-1}'\}$ be $P'$
 with these labels.
 
 Note that the clockwise order by angle of $P'_{\textrm{odd}}$ around $p_0'$ is $(p_1',p_3',\dots)$, 
and that  $p_0'$ lies in an unbounded cell of the line arrangement of the lines defined
by every pair of points of $P_{\textrm{odd}}'$; thus, point $p_0'$  can be moved towards infinity without changing
 this radial order around $p_0'$. Therefore, there is a direction $\vec{d}$
 in which if $P_{\textrm{odd}}'$ is projected orthogonally the order of the projection
 is precisely $(p_1',p_3',\dots)$. We may rotate $\vec{d}$ as long as it does not coincide
 with a direction defined by a pair of points of $P'$ and the order of $P_{\textrm{odd}}'$
in this projection does not change. Let $v'$ and $v''$ be the first vectors,
defined by pairs of points of $P'$, encountered when rotating $\vec{d}$ to the left
and to the right, respectively; let $v=v'+v''=(a,b)$.

We may assume that $||v||=\sqrt{a^2+b^2}\le 4^{1/3}(n/2)^{\frac{1}{24}\log (n/2)}$; otherwise
one of $v'$ and $v''$ has length at least $(1/2)4^{1/3}(n/2)^{\frac{1}{24}\log (n/2)}$,
and therefore a coordinate of value at least $(1/4)^{2/3}(n/2)^{\frac{1}{24}\log (n/2)}$.
Let $v^{\perp}=(b,-a)$. Consider a change of basis from the standard
basis to $\{v,v^{\perp}\}$. Note that under this transformation $(x,y)$ is mapped
to $\left( \frac{ax+by}{a^2+b^2},\frac{ay-bx}{a^2+b^2} \right )$. We multiply the image
of $P'$ under this mapping by $a^2+b^2$, to obtain an isothetic drawing of the Horton
set on $n/2$ points. By Theorem~\ref{thm:lower_isothetic}, this drawing has size at least
$(n/2)^{\frac{1}{8}\log (n/2)}$. Therefore, $P'$ has size at least 
$((n/2)^{\frac{1}{8}\log (n/2)})/(a^2+b^2) \ge (1/4)^{2/3}(n/2)^{\frac{1}{24}\log (n/2)}$.
\end{proof}

We point out that the constants in the exponent of the lower bounds of Theorems
\ref{thm:lower_isothetic} and \ref{thm:lower_gen} can be improved. We 
simplified the exposition at the expense of these worse bounds.

\textbf{Acknowledgments.}
We thank Dolores Lara, Gustavo Sandoval and  Andr\'es Tellez for various
helpful discussions.


\small
\bibliographystyle{abbrv} \bibliography{horton}







\end{document}